\documentclass[runningheads]{llncs}
\usepackage[utf8]{inputenc}
\usepackage{xspace}
\usepackage{amssymb,amsfonts,amsmath}
\usepackage{todonotes}
\usepackage{enumerate}
\usepackage{soul}
\usepackage{hyperref}
\usepackage[capitalise]{cleveref}
\usepackage{paralist}
\usepackage{lineno}
\usepackage{subfigure}
\usepackage[margin=2.5cm]{geometry}

\newtheorem{observation}{Observation}
\newtheorem{prop}{Property}

\let\doendproof\endproof
\renewcommand\endproof{~\hfill$\qed$\doendproof}

\newcommand{\verygood}{kite-planar $1$-planar\xspace}
\newcommand{\good}{$1$-kite-planar\xspace}
\newcommand{\piece}{piece of a kite\xspace}
\graphicspath{{images/}}
\title{On Morphing 1-Planar Drawings\thanks{to appear in Proc.\ of 47th International Workshop on Graph-Theoretic Concepts in Computer Science (WG 2021)}}
\author{Patrizio Angelini\inst{1}, Michael A. Bekos\inst{2},\\ Fabrizio Montecchiani\inst{3}, Maximilian Pfister\inst{2}}
\authorrunning{P.~Angelini, M.~Bekos, F.~Montecchiani, M.~Pfister}

\institute{%
John Cabot University, Rome, Italy\\
\email{pangelini@johncabot.edu}
\and
Institut f\"ur Informatik, Universit\"at T\"ubingen, T\"ubingen, Germany\\
\email{bekos@informatik.uni-tuebingen.de}, \email{maximilian.pfister@uni-tuebingen.de}
\and
Dipartimento di Ingegneria, Universit\`{a} degli Studi di Perugia, Perugia, Italy\\\email{fabrizio.montecchiani@unipg.it}
}

\begin{document}

\maketitle

%\linenumbers

\begin{abstract}
Computing a morph between two drawings of a graph is a classical problem  in computational geometry and  graph drawing.  
While this problem has been widely studied in the context of planar graphs,~very little is known about the existence of topology-preserving morphs for pairs of non-planar graph drawings. We make a step towards this problem by showing that a topology-preserving morph always exists for drawings of a meaningful family of $1$-planar graphs. While our proof is constructive, the vertices may follow trajectories of unbounded~complexity. 
\end{abstract}

\section{Introduction}
Computing a morph between two drawings of the same graph is a classical problem that attracted considerable attention over the years, also in view of its numerous applications in computer graphics and animations (refer to~\cite{AlamdariABCLBFH17} for a short overview). At high level, given two drawings $\Gamma_a(G)$ and $\Gamma_b(G)$ of the same graph $G$, a \emph{morph} between $\Gamma_a(G)$ and $\Gamma_b(G)$ is a continuously changing family of drawings such that the initial one coincides with $\Gamma_a(G)$ and the final one with $\Gamma_b(G)$. A standard assumption is that the two input drawings - as well as all intermediate ones - are \emph{topologically equivalent}, i.e., they define the same set of cells up to a homeomorphism of the plane (see \cref{sec:preliminaries} for formal definitions).
The main challenge is to design morphing algorithms that maintain some additional geometric properties of the input drawings throughout the transformation,  such as planarity with  straight-line edges (see, e.g.,~\cite{AlamdariABCLBFH17,cairns1944,fg-hmti-99}),  convexity~\cite{DBLP:conf/compgeom/AngeliniLFLPR15,thomassen1983}, orthogonality~\cite{Biedl:2013:MOP:2533288.2500118,DBLP:conf/gd/GoethemSV19,DBLP:conf/gd/GoethemSV19}, and upwardness~\cite{ddfpr-upm-gd-18}. We point the interested reader to~\cite{DBLP:conf/compgeom/AngeliniCCLR19,abcdd-pd3-gd-18,DBLP:conf/wads/Barrera-CruzBLB19,barreraCruz/haxel/lubiw:18} for additional related work.

In this context, the most prominent research direction focuses on morphs of straight-line planar drawings: The topological equivalence condition implies that  all drawings in the morph have the same planar embedding; in addition, it is also required that edges remain straight-line segments. Back in 1944,  Cairns~\cite{cairns1944} proved that such morphs always exist if the input graphs are plane triangulations. This implies that, for a fixed plane triangulation, the space of its straight-line planar drawings is connected. The main drawback of Cairns result is in the underlying construction, which involves exponentially-many morphing steps. 
The extension of Cairns' result to all plane graphs was initially done by  Thomassen~\cite{thomassen1983}, while later  Floater and Gotsman~\cite{fg-hmti-99}, and  Gotsman and Surazhsky~\cite{DBLP:journals/cg/GotsmanS01,DBLP:journals/tog/SurazhskyG01} proposed different approaches using trajectories of unbounded complexity. More recently, Alamdari et al.~\cite{DBLP:conf/soda/AlamdariACBFLPRSW13} focused on the complexity of the morph. They described the first morphing algorithm for planar straight-line drawings that makes use of a polynomial number of steps, where in each step vertices move at uniform speed along linear trajectories. In a subsequent paper~\cite{AlamdariABCLBFH17}, a linear bound on the number of steps is shown, which is worst-case optimal.

Morphing non-planar  drawings of graphs appears to be a more elusive problem. In particular, Angelini et al.~\cite{DBLP:conf/icalp/AngeliniLBFPR14} asked whether a morphing
algorithm exists for pairs of non-planar straight-line drawings such that the topology of the crossings in the drawing is maintained throughout the morph. They stressed that a solution to this problem is not known even if the vertex trajectories are allowed to have arbitrary complexity. Note that the obvious idea of morphing the ``planarizations'' of the drawings (i.e., the planar drawings obtained by treating crossings as dummy vertices) does not trivially work. Namely, in order to guarantee that edges remain straight-line segments throughout the morph, one has to ensure that opposite edges incident to dummy vertices maintain the same slope. To the best of our knowledge, such requirement cannot be easily incorporated into any of the already known morphing algorithms for planar graphs. 

One way of simplifying the problem is to consider graphs that are non-planar but still admit embeddings on surfaces of bounded genus. In this direction, Chambers et al.~\cite{morphing-torus:soda2021} proved the existence of morphs for pairs of  crossing-free drawings on the Euclidean flat torus (where edges are still geodesics). Their technique is rather complex and the authors concluded that an extension to higher genus surfaces is fairly non-trivial. 

We make a step towards settling the open problem in~\cite{DBLP:conf/icalp/AngeliniLBFPR14} by studying non-planar drawings of graphs with forbidden edge-crossing patterns. Our focus is on the family of \emph{$1$-planar} graphs, which naturally extends the notion of planarity by allowing each edge to be crossed at most once (see~\cite{DBLP:journals/csr/KobourovLM17} for a survey). Note that $1$-planar graphs form a well studied family of non-planar graphs with early results dating back to the 60's~\cite{avital-66,MR0187232}, while more recently they have gained considerable attention in the rapidly growing literature about beyond planarity~\cite{DBLP:journals/csur/DidimoLM19,DBLP:books/sp/20/HT2020}.

\begin{figure}[t]
    \centering
    \subfigure{\includegraphics[width=0.35\textwidth,page=1]{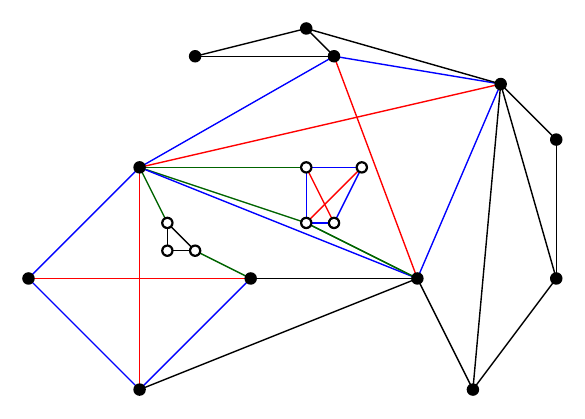}}\hfil
    \subfigure{\includegraphics[width=0.35\textwidth,page=2]{good}}
    \caption{Two topologically-equivalent  \verygood drawings of the same graph.\label{fig:verygood}}
\end{figure}
\smallskip\noindent\emph{Our contribution.} We provide a set of sufficient conditions under which any pair of $1$-planar straight-line drawings admits a morph. At high-level, we require that if two edges cross, then they can be enclosed in a quadrilateral region whose boundary is uncrossed; although this region may contain further vertices in its interior, we require that any edge connecting an end-vertex of the crossing edges to a vertex inside the region is also uncrossed; refer to \cref{fig:verygood} for an illustration. A drawing that satisfies these requirements is called \emph{\verygood} (see \cref{def:good-drawing}). Our main result is summarized by the following theorem.

\begin{theorem}\label{thm:main}
There exists a morph between any pair of topologically-equivalent \verygood drawings.
\end{theorem}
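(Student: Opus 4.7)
The plan is to argue by induction on the number $k$ of crossings in the topologically-equivalent \verygood drawings $\Gamma_a(G)$ and $\Gamma_b(G)$ (note that $k$ is the same for both by topological equivalence).

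For the base case $k=0$, both drawings are straight-line planar drawings with the same planar embedding, so a morph exists by the classical planar morphing result of Alamdari et al.~\cite{AlamdariABCLBFH17}.

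For the inductive step, I would exploit the kite structure to peel off one crossing. Since every crossing sits inside a quadrilateral region (its kite) bounded by uncrossed edges, the kites of the drawing are pairwise nested or interior-disjoint. Pick a crossing $\chi$ whose kite $Q_\chi$ is innermost, i.e.\ contains no other crossing in its interior; by the \verygood property, the sub-drawing strictly inside $Q_\chi$ is then crossing-free. Let $e_1=v_1v_3$ and $e_2=v_2v_4$ be the crossing pair and $v_1,\dots,v_4$ the corners of $Q_\chi$. Remove $e_1$ from both drawings to obtain topologically-equivalent \verygood drawings $\Gamma_a^-$ and $\Gamma_b^-$ of $G-e_1$ with only $k-1$ crossings (every other kite is preserved; $Q_\chi$ simply ceases to contain a crossing). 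By induction there is a morph $\mathcal{M}^-$ from $\Gamma_a^-$ to $\Gamma_b^-$.

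To conclude, I would extend $\mathcal{M}^-$ to a morph $\mathcal{M}$ between $\Gamma_a(G)$ and $\Gamma_b(G)$ by reinstating $e_1$, at each time $t$, as the straight-line segment from $v_1(t)$ to $v_3(t)$, with $v_1(t),v_3(t)$ tracking the trajectories prescribed by $\mathcal{M}^-$. The main obstacle, and where the bulk of the work lies, is verifying that this extension is a valid topology-preserving morph: throughout $\mathcal{M}$ the segment $v_1(t)v_3(t)$ must cross $v_2(t)v_4(t)$ exactly once, must avoid every interior vertex of $Q_\chi$, and must not cross any other edge. To secure this I would strengthen the inductive statement to require that $\mathcal{M}^-$ keeps the boundary of every kite a \emph{convex} quadrilateral at each time. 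This forces $v_1(t)v_3(t)$ and $v_2(t)v_4(t)$ to meet in a single interior point of $Q_\chi$, handling the first two requirements. Convexity of each kite throughout the morph can be arranged by combining the planar morph of $\Gamma_a^-$ with a convex-morph subroutine along the kite boundaries, invoking classical convex-morphing tools of Thomassen~\cite{thomassen1983} and Angelini et al.~\cite{DBLP:conf/compgeom/AngeliniLFLPR15}.

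The remaining and most subtle point is to ensure that $v_1(t)v_3(t)$ never sweeps through an interior vertex of $Q_\chi$. The plan for this is to decouple the morph inside $Q_\chi$ from the outer morph: since the sub-drawing inside $Q_\chi$ (together with the kite boundary) is planar in both $\Gamma_a^-$ and $\Gamma_b^-$, we can apply the classical planar morph result to it separately, with prescribed boundary motion inherited from the outer morph, and with the extra freedom to perturb interior-vertex trajectories so as to stay off the moving diagonal $v_1(t)v_3(t)$. Gluing the two morphs along the (convex) kite boundaries yields the claimed topology-preserving morph. The entire argument, being recursive over kites and invoking planar morphs of unbounded complexity at every level, produces trajectories of unbounded complexity, consistent with the abstract.
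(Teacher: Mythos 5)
Your high-level strategy (peel off one crossing at a time, keep kite boundaries convex, reinstate the missing diagonal) correctly identifies the two difficulties, but both of the steps you defer are genuine gaps, and neither is closable by the tools you cite. First, the convexity requirement: \cref{th:convex} applies only to drawings in which \emph{every} face is strictly convex, and neither $\Gamma_a^-$ nor the intermediate drawings produced by a planar morphing algorithm have this property; moreover, after deleting $e_1$ the quadrilateral $Q_\chi$ is no longer a kite of $G-e_1$, so even your strengthened inductive hypothesis, as stated, promises nothing about $Q_\chi$. To make every face convex one must first \emph{compatibly triangulate} the two drawings everywhere except inside the designated quadrilaterals (\cref{th:compatible}) and verify that the resulting almost-triangulated graph is triconnected (\cref{le:almost}); this is how the paper obtains convexity of all kite boundaries simultaneously, and it is one reason the paper recurses on the nesting level of kites (handling all kites of a level at once) rather than on the number of crossings.

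Second, and more seriously, the interior of a non-empty kite. The reinstated segment $v_1(t)v_3(t)$ must at every time cross $v_2(t)v_4(t)$ exactly once and avoid every vertex \emph{and every edge} of the subgraph drawn inside $Q_\chi$, including the binding edges from the corners, which must remain uncrossed by P.\ref{pr:p3}. Equivalently, each connected piece of that subgraph must stay inside its moving triangular region $v_i(t)\,\chi(t)\,v_{i+1}(t)$, and these triangles can become arbitrarily thin during the outer morph. There is no ``classical planar morph with prescribed boundary motion'' to invoke here, and perturbing interior trajectories off the moving diagonal does not control the edges. This is the technical core of the paper: the contents of each triangular piece are pre-morphed into a \emph{skinny} drawing confined to a half-disk of fixed radius centered at the midpoint of the base edge, with every segment forming an angle less than a bound $\phi$ (derived from the worst case of the outer morph) with the perpendicular radius, and this drawing is then transported rigidly. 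Proving that such a skinny drawing exists requires the structural analysis of the chain of cycles (\cref{le:chain}, \cref{le:skinny}). Without an equivalent of that construction, your induction does not go through.
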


\noindent \cref{thm:main} implies that, for a fixed graph, the space of its topologically-equivalent \verygood drawings is connected. The proof is constructive, although the vertices may use trajectories of unbounded complexity. 
Concerning the definition of \verygood drawings, it may be worth observing that, due to a simple edge density argument, the graphs admitting such a drawing cannot be embedded on any surface of bounded genus. Indeed, as shown in \cref{se:implications}, some well-known families of $1$-planar graphs 
%, including the optimal $1$-planar, 
admit drawings that are \verygood and require arbitrary large genus to be embedded. 

\smallskip\noindent\emph{Paper structure.} \cref{sec:preliminaries} contains basic definitions and notation. \cref{sec:main} gives an overview of the proof technique, which exploits a recursive construction. The base case of the recursion is described in \cref{sec:base}, while the recursive step is in \cref{sec:recursive}. Implications of our result in terms of classes of $1$-planar drawings that admit a morph are discussed in    \cref{se:implications}. Open problems are given in \cref{sec:conclusions}.

\section{Preliminaries}\label{sec:preliminaries}

\noindent\textbf{Drawings.} A \emph{straight-line drawing} (or simply a \emph{drawing}, for short) $\Gamma(G)$ of a graph $G$ maps each vertex $v$ of $G$ to a distinct point $p_v$ of the plane and each edge $(u,v)$ of $G$ to a straight-line segment connecting $p_u$ and $p_v$ without passing through any other point representing a vertex of $G$. When this creates no ambiguities, we will not distinguish between a vertex and the point representing it in $\Gamma(G)$, as well as between an edge and its segment. Note that, by definition, two edges of a drawing share at most one point, which is either a common endpoint or an interior point where the two edges properly cross. Drawing $\Gamma(G)$ partitions the plane into connected regions called \emph{cells}. The boundary of a cell consists of vertices, crossing points, and (parts of) edges. The \emph{external cell} of $\Gamma(G)$ is its (only) unbounded cell.  Two drawings $\Gamma_a(G)$ and $\Gamma_b(G)$ of the same graph $G$ are \emph{topologically equivalent} if they define the same set of cells up to a homeomorphism of the plane.  An \emph{embedding} of $G$ is an equivalence class of drawings that are pairwise topologically equivalent. 

A drawing $\Gamma(G)$ is \emph{planar} if no two edges cross. In this case, the cells of $\Gamma(G)$ are called \emph{faces} and their boundaries consist of just vertices and edges. A graph is \emph{planar} if it admits a planar drawing. A planar graph  together with an embedding defined by a planar drawing is a \emph{plane graph}. A planar drawing is \emph{strictly convex} if all its faces are strictly convex polygons. 

A graph is \emph{$1$-planar} if it admits a (not necessarily straight-line) \emph{$1$-planar drawing} in which every edge crosses at most one other edge. A $1$-planar graph together with an embedding defined by a $1$-planar drawing is a \emph{$1$-plane graph}. A \emph{kite} $K$ in a $1$-planar drawing $\Gamma(G)$ is a $1$-planar drawing of $K_4$ in $\Gamma(G)$ whose external cell is a quadrilateral.  The four edges on the boundary of the external cell of $K$ are called \emph{kite edges}. The other two edges are the \emph{crossing edges} of $K$ and are drawn inside the quadrilateral bounding $K$. A \emph{partial kite} is a subdrawing of a kite such that some kite edges may not exist. \Cref{fig:verygood} shows two kites and one partial kite; the kite (crossing) edges are blue (red, resp.). 

Throughout the paper, we assume that the four kite-edges of any (partial) kite exist and are uncrossed; this will be justified by P.\ref{pr:p2} of \cref{def:good-drawing}. In other words, we assume that, in the considered drawings, no kite is partial. Given a vertex $v$ of $G$ and a kite $K$, the following exclusive cases can occur:
\begin{inparaenum}[(i)]
\item $v$ \emph{belongs} to $K$, if it is a vertex of the $K_4$ defining $K$, or
\item  $v$ is \emph{inside} $K$ (or $K$ \emph{contains} $v$) if $v$ lies in the interior of the quadrilateral bounding $K$, or
\item $v$ is outside $K$, otherwise.
\end{inparaenum}
A kite is \emph{empty} if it contains no vertex; otherwise, it is \emph{non-empty}. An edge $(u,v)$ is a \emph{binding edge} (green in \cref{fig:verygood}) if $u$ belongs to a non-empty kite $K$ and $v$ is inside $K$. 
We can now introduce \verygood drawings.

\begin{definition}\label{def:good-drawing}
A straight-line drawing is \emph{\verygood}, or \emph{\good} for short, if:
\begin{inparaenum}[\bf(P.1)]
\item\label{pr:p1}~every edge is crossed at most once,
\item\label{pr:p2}~the four kite~edges of every (possibly partial) kite are either present or can be drawn with uncrossed straight-line segments, and
\item\label{pr:p3}~every binding edge is uncrossed.
\end{inparaenum}
\end{definition}

Let $\Gamma(G)$ be a \good drawing of $G$. We say that a vertex of $G$ is of \emph{level $0$} if no kite contains it, while it is of \emph{level $i>0$} if the maximum level of the vertices belonging to a kite containing it is $i-1$. In \cref{fig:verygood}, the black (white) vertices are of level $0$ (level $1$, resp.). The next property follows from  P.\ref{pr:p3} of \Cref{def:good-drawing}.
\begin{prop}\label{pr:kite-same-level}
If two vertices belong to the same kite of a \good drawing $\Gamma(G)$, then they are of the same level.
\end{prop}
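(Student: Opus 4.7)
The plan is to prove the statement directly by a case analysis on the position of one vertex relative to a kite containing the other. Let $u$ and $v$ be two vertices of a common kite $K$, and set $i=\mathrm{level}(u)$ and $j=\mathrm{level}(v)$. By symmetry I assume $i \le j$ and aim to show $i \ge j$. I label the four vertices of $K$ cyclically along its quadrilateral boundary as $u, v, v_1, v_2$, so that the two crossing edges of $K$ are $(u,v_1)$ and $(v,v_2)$; the case in which the two chosen vertices are opposite in $K$'s quadrilateral can be handled by transitivity through a vertex adjacent to both, so I may assume $u$ and $v$ are adjacent. The case $j = 0$ is immediate, so I take $j \ge 1$: by definition of level there exists a kite $K_v$ containing $v$ together with a vertex $w \in K_v$ with $\mathrm{level}(w) = j-1$.

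The whole argument hinges on which of the three options in the trichotomy stated before Definition~\ref{def:good-drawing} applies to $u$ with respect to $K_v$. If $u$ is \emph{outside} $K_v$, then the edge $(u,v)$ goes from outside to inside $K_v$ and would have to cross a kite edge of $K_v$, contradicting our standing assumption that kite edges are uncrossed. If $u$ is \emph{inside} $K_v$, then $K_v$ also contains $u$, so by definition of level $\mathrm{level}(u) \ge \mathrm{level}(w)+1 = j$, giving $i = j$, as desired. The main obstacle is thus ruling out the remaining case, in which $u$ \emph{belongs} to $K_v$.

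For that case I plan to exploit the crossing edge $(u,v_1)$ of $K$, which is crossed by $(v,v_2)$. Applying the same ``no-exit'' reasoning to the uncrossed kite edge $(v,v_1)$ of $K$, together with $v$ lying inside $K_v$, shows that $v_1$ is either inside $K_v$ or belongs to $K_v$. In the latter sub-case $(u,v_1)$ would be an edge between two vertices of $K_v$, hence either a kite edge of $K_v$ (impossible, since $(u,v_1)$ is crossed) or a crossing edge of $K_v$; in the second possibility, by Property~P.\ref{pr:p1} the unique crossing partner of $(u,v_1)$ would be forced to be the other crossing edge of $K_v$, but that partner is actually $(v,v_2)$ and $v$ is not a vertex of $K_v$, a contradiction. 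Hence $v_1$ is inside $K_v$, so $(u,v_1)$ is a binding edge of $K_v$ and, by Property~P.\ref{pr:p3}, uncrossed; this contradicts $(u,v_1)$ being a crossing edge of $K$, eliminating the final case and completing the proof.
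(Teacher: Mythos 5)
Your proof is correct and follows essentially the same route as the paper's: both arguments show that the vertex of $K$ lying on the boundary of the containing kite ($K_v$, resp.\ $K'$) would be the endpoint of a crossing edge of $K$ that is a binding edge of that kite, contradicting P.\ref{pr:p3}. Your version is organized as a trichotomy rather than a direct contradiction on the level inequality, and it explicitly justifies the step the paper only asserts (namely that the other endpoint $v_1$ of that crossing edge must lie strictly inside $K_v$), so it is a more detailed rendering of the same idea.
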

\begin{proof}
Suppose, for a contradiction, that two vertices $v$ and $w$ that belong to a kite $K$ are such that the level of $v$ is larger than the level of $w$. This implies that $v$ is inside a kite $K'$ that does not contain $w$. By 1-planarity, $w$ is a kite-vertex of $K'$. It follows that there is a crossing edge of $K$ incident to $w$ that is a binding edge with respect to $K'$, which contradicts Property P.\ref{pr:p3} of \Cref{def:good-drawing}.
\end{proof}

\smallskip\noindent\textbf{Morphs.} Let $\Gamma_a(G)$ and $\Gamma_b(G)$ be two topologically-equivalent drawings of the same graph $G$. A \emph{morph} between them is  a continuously changing family of pairwise topologically-equivalent drawings of $G$ indexed by time $t \in  [0, 1]$, such that the drawing at time $t = 0$ is $\Gamma_a(G)$ and the drawing at time $t = 1$ is $\Gamma_b(G)$. Since edges are drawn as straight-line segments, a morph is uniquely specified by the vertex trajectories. Also, during the course of the morph, a vertex may coincide with neither another vertex nor an internal point of an edge.

\section{Outline of the Proof of \Cref{thm:main}}\label{sec:main}

In this section, we give an outline of the proof of \Cref{thm:main}, namely, that there exists a morph between any two topologically-equivalent \good drawings $\Gamma_a(G)$ and $\Gamma_b(G)$ of a graph $G$. Recall that $\Gamma_a(G)$ and $\Gamma_b(G)$ define the same embedding of $G$. Hence, $G$ is necessarily a $1$-plane graph. 

Our proof is by means of a recursive construction. The underlying idea is to compute a morph by keeping each kite boundary drawn as a strictly-convex polygon, so that, in the course of the morph, the drawing of the corresponding crossing edges will stay inside their boundary. The main challenge, however, stems from the fact that a kite may not be empty. Therefore, our approach is to remove the interior of each kite, recursively compute a morph that keeps the convexity of the kite boundaries, and suitably reinsert (and morph) the removed subdrawings.  
In the proof, we will use two key ingredients.  The first one is a result by Aronov et al.~\cite{ass-ctsp-93}, which guarantees that one can \emph{compatibly triangulate} two topologically-equivalent planar drawings of a planar graph.

\begin{theorem}[Aronov et al.~\cite{ass-ctsp-93}]\label{th:compatible}
Given two topologically-equivalent planar drawings $\Gamma_a(P)$ and $\Gamma_b(P)$ of the same $n$-vertex planar graph $P$, it is possible to augment $\Gamma_a(P)$ and $\Gamma_b(P)$ to two topologically-equivalent planar drawings $\Gamma_a(P')$ and $\Gamma_b(P')$ of the same maximal planar graph $P'$ such that $\Gamma_a(P) \subseteq \Gamma_a(P')$, $\Gamma_b(P) \subseteq \Gamma_b(P')$, and the order of $P' \setminus P$ is $O(n^2)$.
\end{theorem}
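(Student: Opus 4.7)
\smallskip\noindent\emph{Proof proposal for \Cref{th:compatible}.} The plan is to reduce the problem to compatibly triangulating pairs of corresponding faces of $P$, and then to show the core geometric statement: any two simple polygons sharing a cyclic boundary labeling admit compatible triangulations using only $O(k)$ Steiner points each, where $k$ is the boundary length. Summing over the $O(n)$ faces gives the $O(n^2)$ bound.

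\emph{Reduction to a face-by-face problem.} Since $\Gamma_a(P)$ and $\Gamma_b(P)$ are topologically equivalent, they induce the same combinatorial embedding and therefore put the faces of $P$ in bijective correspondence. First I would make $P$ connected by adding, in both drawings simultaneously, a minimal set of ``bridge'' edges consistent with the common embedding (this preserves topological equivalence); I would also enclose each drawing in a common outer triangle to turn the unbounded face into a simple polygon. After this preprocessing, each face of $P$ corresponds to a pair of simple polygonal regions $(f_a, f_b)$ with an identical cyclic sequence of boundary vertices, and the two compatible triangulations can be built face by face.

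\emph{Core lemma: compatible triangulation of a pair $(f_a, f_b)$.} I would proceed by induction on the boundary size $k$. In the base case $k=3$ both polygons are triangles and nothing is to be done. For $k > 3$ the goal is to find a \emph{common diagonal}: two boundary vertices $u,w$ such that the chord $uw$ is a valid (non-crossing, interior) diagonal of both $f_a$ and $f_b$. Searching among the $O(k^2)$ candidate chords, one looks in particular at ``ear'' diagonals: vertices $v$ convex in both polygons whose neighbors $u,w$ bound an empty triangle in both. If such an $v$ exists, cut off the ear $uvw$ and recurse on the two smaller polygons obtained by deleting $v$. If no compatible ear exists, pick a vertex $v$ that is convex in $f_a$, and let $x$ be a boundary vertex that blocks the would-be ear of $v$ in $f_b$ (so $x$ lies inside the triangle $uvw$ in $f_b$); add the chord $(v,x)$ in both polygons. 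If $(v,x)$ is already a valid diagonal in $f_a$, it splits both polygons and we recurse; otherwise, $(v,x)$ crosses boundary edges in $f_a$, and we insert a single Steiner point $s$ at the first such crossing — matching it in $f_b$ by placing it on the corresponding combinatorial edge — so that $(v,s)$ becomes a valid diagonal of both.

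\emph{Main obstacle.} The hard part is to guarantee that (i) a combinatorially consistent Steiner point can always be found (its position in $f_a$ must correspond to its position in $f_b$), and (ii) the total number of Steiner points is $O(k)$ per face. For (i) I would argue by a case analysis on the relative visibility structures of $v$ in $f_a$ and in $f_b$, showing that whenever a common ear fails to exist, there is always a boundary edge whose subdivision simultaneously fixes both polygons. For (ii) I would use an amortized argument: each recursive step either eliminates at least one boundary vertex (ear removal), or produces two subproblems of strictly smaller size while introducing at most one Steiner point whose cost can be charged to the vertex it replaces on the boundary. Summing the recursion across all faces yields the claimed $O(n^2)$ bound.
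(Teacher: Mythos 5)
First, note that the paper does not prove \cref{th:compatible} at all: it is imported as a black box from Aronov et al.~\cite{ass-ctsp-93}, so there is no in-paper argument to compare against. Your reduction to a face-by-face problem (connect the graph, fix the outer face, then compatibly triangulate each pair of corresponding polygonal faces with identical boundary labelings) is indeed the standard way the polygon result is lifted to planar drawings, and that part is fine.

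The genuine gap is in your core lemma. You claim that any two simple polygons with a common cyclic boundary labeling of length $k$ admit compatible triangulations with only $O(k)$ Steiner points, and this is false: Aronov, Seidel, and Souvaine prove in the very paper being cited that $\Omega(k^2)$ Steiner points are sometimes necessary for a single pair of $k$-gons (spiral-like examples), which is exactly why the theorem's bound is $O(n^2)$ and not $O(n)$. Your accounting also betrays the problem: since the total boundary length over all faces of a planar graph is $O(n)$, an $O(k)$-per-face bound would sum to $O(n)$, not $O(n^2)$, so the claimed conclusion does not even follow from the claimed lemma. Concretely, the amortized argument fails because an inserted Steiner point does not ``replace'' a boundary vertex --- it becomes a new boundary vertex of both subpolygons, so the recursion can insert Steiner points quadratically often; and the step ``place $s$ on the corresponding edge of $f_b$ so that $(v,s)$ is a valid diagonal of both'' is not always achievable, which is precisely the obstruction driving the lower bound. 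The actual upper-bound proof of~\cite{ass-ctsp-93} takes a different route: triangulate each polygon separately without Steiner points, map both triangulations onto a common canonical (convex) region, and take the overlay of the two images, which has $O(k^2)$ complexity and pulls back to compatible triangulations. If you want to keep a recursive/ear-based scheme, you must at least weaken the per-face bound to $O(k^2)$ and justify termination without the false charging argument.
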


The second ingredient is a result by Angelini et al.~\cite{DBLP:conf/compgeom/AngeliniLFLPR15}, which allows us to morph a pair of convex drawings by preserving the convexity of the faces. The main properties of this result are summarized in the next theorem.

\begin{theorem}[Angelini et al.~\cite{DBLP:conf/compgeom/AngeliniLFLPR15}]\label{th:convex}
Let $\langle \Gamma_a(P), \Gamma_b(P) \rangle$ be a pair of topologically-equivalent strictly-convex planar drawings of a graph $P$. There is a morph between $\Gamma_a(P)$ and $\Gamma_b(P)$ in which every intermediate drawing is strictly convex. If the outer face of $G$ has only three vertices and each of them has the same position in $\Gamma_a(P)$ and $\Gamma_b(P)$, then these three vertices do not move during~this~morph.
\end{theorem}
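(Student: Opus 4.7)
The plan is to reduce the general case to that of a triangulation with a fixed triangular outer face, and then interpolate barycentric coordinates in the spirit of Tutte and Floater. First I would enclose both drawings in a common large triangle $abc$ placed identically in $\Gamma_a$ and $\Gamma_b$, and (using Theorem~\ref{th:compatible} on the region between this triangle and the original outer boundary) compatibly triangulate the exterior of $P$ in the two drawings. Then I would triangulate the remaining faces of $P$ consistently in both drawings, exploiting strict convexity of each face to choose diagonals that keep all subfaces strictly convex. This yields strictly-convex drawings $\Gamma_a(P')$ and $\Gamma_b(P')$ of a common triangulation $P'$ whose outer face is the triangle $abc$, fixed in both.

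Next, for each strictly-convex drawing of $P'$, I would invoke Floater's theorem to obtain strictly positive barycentric weights $\{\lambda_{uv}\}$, one per directed edge with $u$ internal and $\sum_{v\sim u}\lambda_{uv}=1$, such that every internal vertex satisfies $p_u=\sum_{v\sim u}\lambda_{uv}\,p_v$. Let $\Lambda_a$ and $\Lambda_b$ be the two weight systems. I would then linearly interpolate them, setting $\lambda_{uv}(t)=(1-t)\lambda^a_{uv}+t\lambda^b_{uv}$: the interpolated weights remain strictly positive and still sum to one on each star. At every $t\in[0,1]$ I would solve the linear system enforcing the barycentric equations while keeping $a,b,c$ fixed at their common positions; the coefficient matrix depends continuously on $t$ and is non-singular (a standard consequence of positivity of the weights on a connected triangulation), so the unique solution $\{p_u(t)\}$ depends continuously on $t$.

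By the Tutte--Floater theorem, positive weights together with a strictly-convex outer boundary produce a strictly-convex planar drawing of $P'$. Hence $\{p_u(t)\}$ is a strictly-convex planar drawing at every time, and restricting to the edges of $P$ (that is, forgetting the added enclosing triangle and the triangulating diagonals) yields a morph between $\Gamma_a(P)$ and $\Gamma_b(P)$ in which every intermediate drawing is strictly convex. For the ``moreover'' part, if the outer face of $P$ is already a triangle whose three vertices have identical positions in $\Gamma_a$ and $\Gamma_b$, we skip the step of adding an enclosing triangle and fix these three outer vertices directly; they then do not move by construction.

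The hard part will be establishing Floater's existence theorem for positive barycentric weights realizing an arbitrary strictly-convex drawing of a triangulation, which is the non-trivial ingredient behind this line of argument. A secondary obstacle is the reduction to a common triangulation with a fixed triangular outer face while preserving strict convexity in both drawings simultaneously; this requires choosing, face by face, a triangulation pattern that yields strictly-convex subfaces in both $\Gamma_a$ and $\Gamma_b$, which is always possible because any triangulation of a strictly-convex polygon keeps all subfaces strictly convex.
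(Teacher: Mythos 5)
First, note that the paper does not actually prove \cref{th:convex}: it is imported as a black box from Angelini et al.~\cite{DBLP:conf/compgeom/AngeliniLFLPR15}, so your proposal is really an alternative proof of that external result, in the spirit of the convex-combination morphs of Floater and Gotsman~\cite{fg-hmti-99,DBLP:journals/cg/GotsmanS01} (which the introduction of this paper contrasts with~\cite{DBLP:conf/compgeom/AngeliniLFLPR15} precisely because they yield trajectories of unbounded complexity; that is acceptable here, since \cref{thm:main} makes no complexity claim).

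However, as written your argument has a genuine gap at its central step. After you triangulate the faces of $P$ to obtain $P'$, Floater's injectivity theorem applied to the interpolated weights only guarantees that each intermediate drawing of the \emph{triangulation} $P'$ is planar, i.e., that every triangle is non-degenerate and consistently oriented. It does not guarantee that a face of $P$ --- which is a union of several triangles of $P'$ glued along the diagonals you added --- remains a strictly convex polygon at intermediate times; already a quadrilateral face split by one diagonal can fold into a non-convex quadrilateral while both of its triangles stay non-degenerate and the drawing stays planar. The same objection applies to the outer boundary of $P$ after you triangulate the annular region between $P$ and the enclosing triangle $abc$. The standard repair is to run the barycentric argument on $P$ itself rather than on a triangulation: in a strictly convex drawing every internal vertex lies in the interior of the kernel of the star polygon formed by its neighbours, hence in the interior of their convex hull, so strictly positive weights realizing $\Gamma_a(P)$ and $\Gamma_b(P)$ already exist on the edges of $P$; and Tutte's spring theorem in its weighted form, applied to a $3$-connected plane graph with a strictly convex outer face, produces a drawing in which \emph{every} face is strictly convex, for any choice of positive weights. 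Interpolating the weights then keeps all intermediate drawings strictly convex as drawings of $P$. This forces you to address $3$-connectivity (or internal $3$-connectivity) of $P$ explicitly and to keep the outer face of $P$ itself, rather than an artificial enclosing triangle, as the fixed convex boundary --- which also disposes of the ``moreover'' clause directly.
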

We apply recursion on the maximum level $\ell$ of a vertex of $G$. 
The base case ($\ell=0$) is described in \cref{sec:base}, while the recursive case ($\ell>0$) in \cref{sec:recursive}. \Cref{fig:schema} provides a high-level description of the~main steps~in~the~proof.

\begin{figure}[b]
    \centering
    \includegraphics[scale=0.5,page=2]{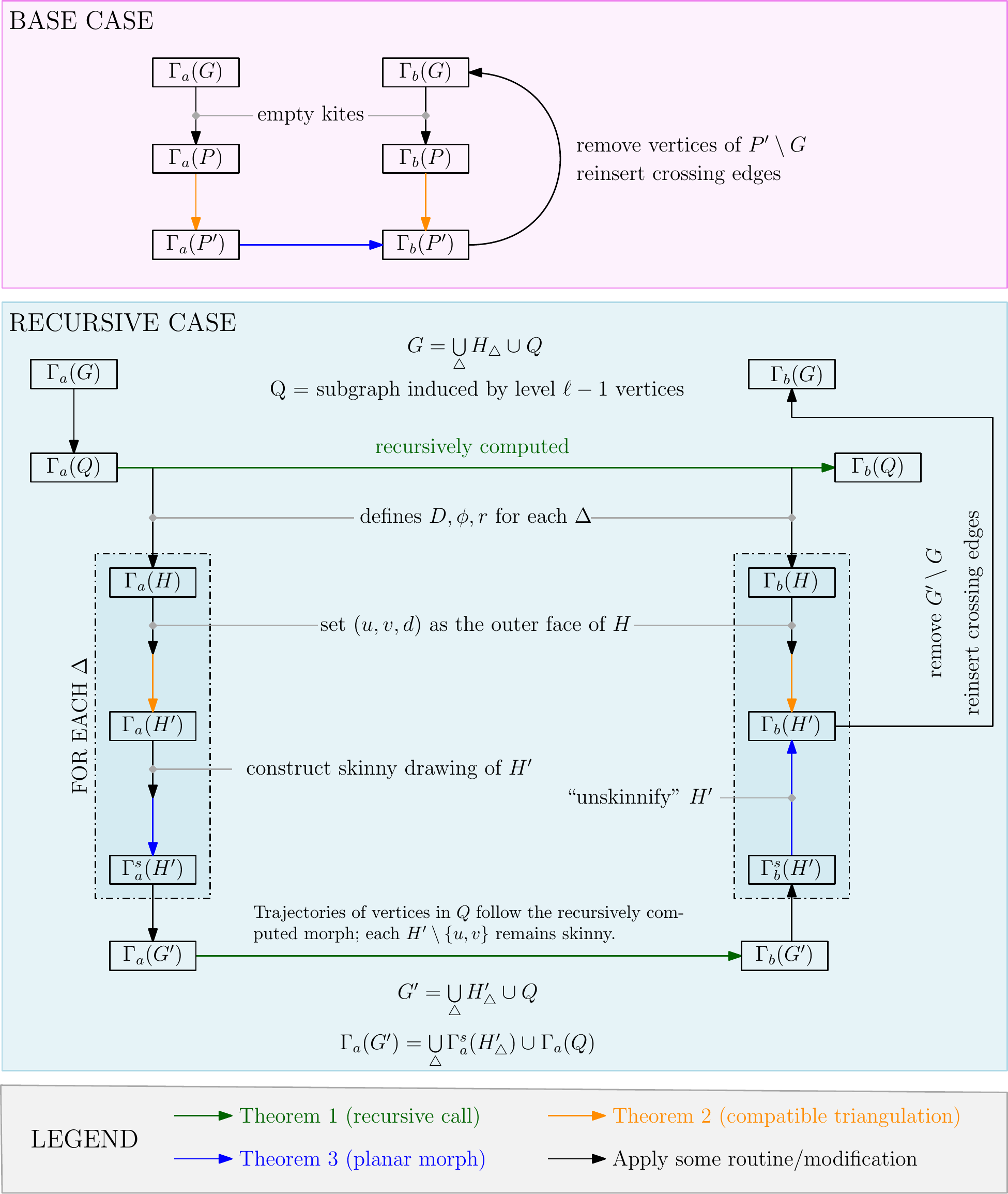}
    \caption{Illustration of the transitions $\Gamma_a(G) \rightarrow \Gamma_a(P) \rightarrow \Gamma_a(P')$; marked faces are gray.\label{fig:base-case}}
\end{figure} 

\begin{figure}[p]
\includegraphics[width=0.9\textwidth]{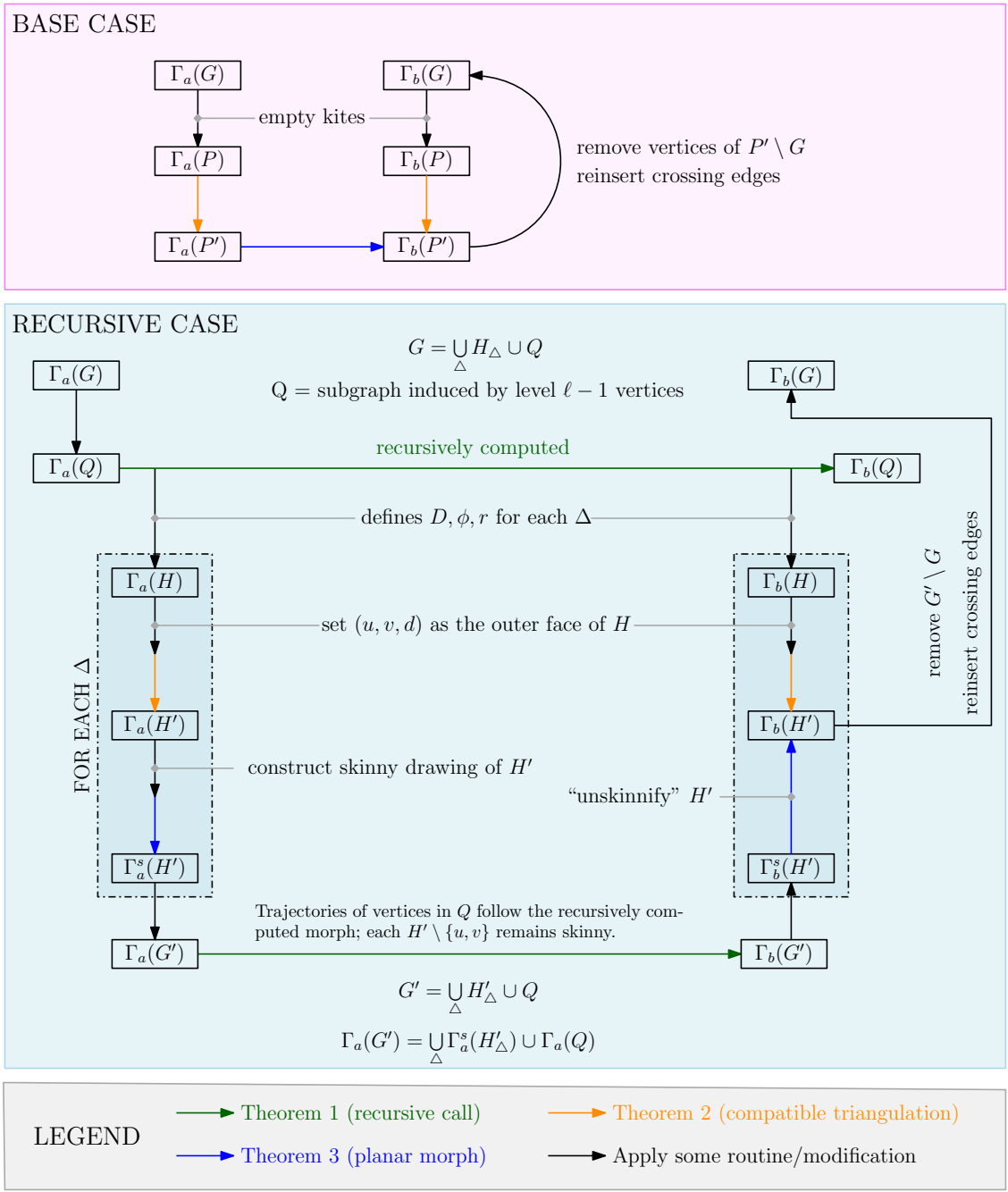}
\caption{Schematic illustration for the proof of \cref{thm:main}\label{fig:schema}}
\end{figure}

\section{Base case}\label{sec:base}

In the base case of the recursion, all the vertices of $G$ are of level $0$, which implies that all the kites of $G$, if any, are empty. 
Let $P$ be the graph obtained by removing both crossing edges from each kite of $G$. Let $\langle \Gamma_a(P), \Gamma_b(P) \rangle$ be the restrictions of $\langle \Gamma_a(G), \Gamma_b(G) \rangle$ to $P$, respectively; see \cref{fig:base-case}. By construction, $\langle \Gamma_a(P), \Gamma_b(P) \rangle$ is a pair of planar and topologically-equivalent drawings, and $P$ is a plane subgraph of $G$. The kite edges of each kite $K$ of $G$ are uncrossed (by P.\ref{pr:p2} of \cref{def:good-drawing}) and bound a quadrangular face $f_K$ in $P$, which we~call~\emph{marked}.

Let $P'$ and $\langle \Gamma_a(P'), \Gamma_b(P') \rangle$ be the graph and the corresponding pair of planar drawings obtained by applying \Cref{th:compatible} to $\langle \Gamma_a(P), \Gamma_b(P) \rangle$, except for the marked faces; see Fig.~\ref{fig:base-case} for an illustration. This operation guarantees that % $P'$ has $O(n^2)$ vertices and 
every face in both drawings is a triangle, if not marked, or a quadrangle, if marked. We call a plane graph with such faces \emph{almost triangulated}, and we next prove that it is triconnected.
%We now prove the following~property.

%\begin{figure}[t]
%    \centering
%    \includegraphics[scale=0.5,page=2]{diagram.pdf}
%    \caption{Illustration of the transitions $\Gamma_a(G) \rightarrow \Gamma_a(P) \rightarrow \Gamma_a(P')$; marked faces are gray.\label{fig:base-case}}
%\end{figure}

\begin{lemma}\label{le:almost}
Every almost triangulated plane graph is triconnected.
\end{lemma}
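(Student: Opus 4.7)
The plan is to argue by contradiction that $P'$ admits no separation pair. Since every face of $P'$ is bounded by a simple cycle of length $3$ or $4$, $P'$ is already $2$-connected, so suppose $\{u,v\}$ is a separation pair and let $C_1, C_2$ be two connected components of $P'-\{u,v\}$. I would then form a plane triangulation $T$ by adding an arbitrary diagonal inside each marked face of $P'$. Since a plane triangulation on at least four vertices is $3$-connected, $T-\{u,v\}$ is connected, so at least one added diagonal $e=xy$ satisfies $x\in C_1$ and $y\in C_2$. Letting $Q$ be the marked face containing $e$, each of its other two vertices is adjacent along $\partial Q$ to both $x$ and $y$; if either of them were not in $\{u,v\}$, it would connect $C_1$ and $C_2$ in $P'-\{u,v\}$, a contradiction. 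Hence $\{u,v\}$ is exactly the opposite pair of $xy$ in $Q$.

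The next step is the uniqueness of the marked face $Q$. Since $Q$ descends from a kite of $G$, its boundary is drawn as a \emph{convex} straight-line quadrangle in $\Gamma_a(P')$, and consequently the open segment $uv$ lies strictly in the interior of $Q$. This has two consequences: $uv$ cannot be an edge of $P'$ (otherwise its straight-line drawing would cross the interior of a face), and any other marked face $Q'$ having $\{u,v\}$ as an opposite pair would also contain the segment $uv$ in its interior, contradicting disjointness of face interiors. Therefore $Q$ is the unique marked face of $P'$ whose opposite pair is $\{u,v\}$.

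Finally, I would construct a second triangulation $T'$ of $P'$: inside $Q$, add the diagonal $uv$ (allowed because $uv\notin E(P')$), and inside every other marked face $F$, choose a diagonal not incident to $\{u,v\}$ (this is possible because, by the uniqueness just established together with $uv\notin E(P')$, $F$ contains at most one of $u,v$ on $\partial F$). The resulting graph $T'$ is a plane triangulation, hence $3$-connected, so $T'-\{u,v\}$ is connected. On the other hand, inside $T'-\{u,v\}$ the diagonal $uv$ is removed together with $u$ and $v$, and no other added diagonal $pq$ can have $p\in C_1$ and $q\in C_2$: if one did, the argument of the first paragraph would force its enclosing marked face to also have $\{u,v\}$ as opposite pair, contradicting uniqueness. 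Thus $T'-\{u,v\}$ equals $P'-\{u,v\}$ augmented only by edges lying entirely within a single component of $P'-\{u,v\}$, so it is still disconnected, yielding the desired contradiction. The main obstacle is the uniqueness step, which crucially relies on the convex straight-line drawing of each marked face inherited from the kite structure of $G$; without it, an almost triangulated plane graph can genuinely fail to be $3$-connected.
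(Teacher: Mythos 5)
Your proof is correct, but it follows a genuinely different route from the paper's. The paper argues directly on the faces of $P'$: a separation pair $\{u,v\}$ forces two faces incident to both vertices, at least one of which must be a marked quadrangle (a triangular face on $u$ and $v$ would carry the edge $(u,v)$); that marked face puts $(u,v)$ among the crossing edges of $G$ and hence outside $P'$, after which the second face can be neither a triangle nor another marked quadrangle (the latter would force a second copy of $(u,v)$ in the simple graph $G$). You instead reduce the claim to the $3$-connectivity of simple plane triangulations: one arbitrary completion of the marked faces locates a marked quadrangle $Q$ whose opposite corner pair is exactly $\{u,v\}$, and a second, carefully chosen completion --- diagonal $uv$ inside $Q$, diagonals avoiding $u$ and $v$ elsewhere --- yields a triangulation $T'$ with $T'\setminus\{u,v\}$ still disconnected, a contradiction. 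Where the paper leans on the simplicity of $G$, you lean on geometry: the strict convexity of each marked quadrilateral in the straight-line drawing, together with the disjointness of face interiors, rules out $(u,v)\in E(P')$ and a second marked face with opposite pair $\{u,v\}$. Your route is longer but makes explicit two points the paper glosses over: that $u$ and $v$ must be \emph{opposite} (not adjacent) corners of the marked face, and that the lemma is false for abstract almost triangulated plane graphs (a $4$-cycle plus one chord, with the quadrangular face marked, is a counterexample), so the kite-derived convex geometry is genuinely needed. Two small presentational points: the convexity argument showing that no diagonal of a marked face is already an edge of $P'$ is needed earlier than you state it, namely to guarantee that your first triangulation $T$ is simple (otherwise its $3$-connectivity is not automatic); and the diagonal you extract from $T$ may a priori join two components other than the $C_1,C_2$ you fixed, which is harmless after relabelling.
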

\begin{proof}
Let $P'$ be an almost triangulated plane graph derived from a \good drawing of a $1$-planar graph $G$. Suppose that $P'$ contains a separation pair $\{u,v\}$. Then there exist at least two faces $f_1$ and $f_2$ that are incident to both $u$ and $v$ such that at least one, say $f_2$, is not triangular and hence is marked. In the presence of this marked face, edge $(u,v)$ exists in $G$ and not in $P'$. Consequently, $f_1$ cannot be a triangle, as otherwise it would contain edge $(u,v)$ on its boundary. On the other hand, if $f_1$ is marked,  then $G$ contains another copy of $(u,v)$ drawn inside the kite that yielded $f_1$, which is impossible since $G$ is simple. Hence, $P'$ contains no separation pair. The absence of cutvertices stems from the fact that each face is either a triangle or a quadrangle (if marked).
\end{proof}

%Besides the fact that $P'$ is triconnected, we further note that, s
Since each kite contains two crossing edges in $G$, its boundary is drawn~strictly convex in both $\Gamma_a(G)$ and  $\Gamma_b(G)$. Hence, $\Gamma_a(P')$ and $\Gamma_b(P')$ are two strictly convex planar drawings of $P'$. This property allows to apply \Cref{th:convex} to compute a morph of  $\langle \Gamma_a(P'), \Gamma_b(P') \rangle$  that maintains the strict convexity of the drawing at any time instant. 
Since each marked face $f_K$ remains strictly convex, adding back the two crossing edges of the corresponding kite $K$ in $P'$ yields a morph of a supergraph of $G$ (and thus of $G$) in which these crossing edges remain inside the boundary of $K$ at any time instant. This concludes the base case.

\section{Recursive case}\label{sec:recursive}

In this section, we focus on the recursive step of the proof of \Cref{thm:main}, in which the maximum level of a vertex in $G$ is $\ell > 0$. 
Let $Q$ be the graph obtained by removing all the vertices of level $\ell$ from $G$, and let $\langle \Gamma_a(Q),\Gamma_b(Q) \rangle$ be the restriction of $\langle \Gamma_a(G),\Gamma_b(G)\rangle$ to $Q$. Clearly, the two drawings of $Q$ are topologically equivalent and the maximum level of a vertex is $\ell-1$.  Thus, we can recursively compute a morph of $\langle \Gamma_a(Q),\Gamma_b(Q) \rangle$. In what follows, we describe how to incorporate the trajectories of the level-$\ell$ vertices into the morph of $\langle \Gamma_a(Q),\Gamma_b(Q) \rangle$, so to obtain the desired morph of $\langle \Gamma_a(G),\Gamma_b(G)\rangle$. 

\medskip\noindent\textbf{Setting up the morph.} 
We begin by observing that, by \Cref{pr:kite-same-level}, each vertex of level $\ell$ is contained in a kite whose vertices are all of level $\ell-1$, which implies that this kite is empty in $Q$ (but not in $G$). Consider such a kite $K$, and note that its two crossing edges define four triangular regions that remain non-degenerate during the morph of $\langle \Gamma_a(Q),\Gamma_b(Q)\rangle$. We refer to each of these four triangular regions as a \emph{\piece}. Consider a piece of kite $K$ and denote it by $\triangle$. The unique edge $(u,v)$ of $\triangle$ that belongs to the boundary of $K$ is called the \emph{base edge} of $\triangle$. Since $\triangle$ remains non-degenerate during the morph, there exists a half-disk $D$ that, throughout the whole morph, has the following properties (see also \cref{fig:disk} for an illustration):

\begin{itemize}
    \item half-disk $D$ lies in $\triangle$ and is centered at the midpoint $w$ of $(u,v)$, and
    \item the length of its radius is positive and it does not change.
\end{itemize}

\begin{figure}[h]
    \centering
    \includegraphics[scale=0.9,page=1]{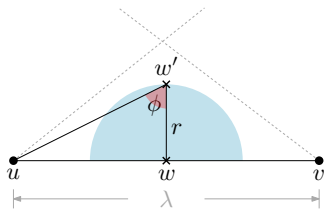}
    \caption{Illustration of the half-disk $D$ of $\triangle$ and their geometric properties.\label{fig:disk}}
\end{figure}

\noindent Let $\lambda$ be the smallest length of the base edge $(u,v)$ during the morph, let $r$ be the radius of $D$ perpendicular to $(u,v)$, and let $w'$ be the endpoint of $r$ different from $w$. Also, denote by $t^*$ any time instant of the morph when the length of $(u,v)$ equals $\lambda$, and let $\phi$ be the internal angle at $w'$ of the triangle formed by $u,w$ and $w'$ at time $t^*$. In particular, $\phi$ satisfies $\tan(\phi) = \frac{\lambda}{2} \cdot \frac{1}{|r|}$.

Consider the graph $\mathcal{H} = G \setminus Q$ induced by the level-$\ell$ vertices of $G$, and let $H_{\triangle}$ be the subgraph of $\cal H$ that lies inside $\triangle$. We proceed to compute a drawing of  $H_{\triangle}$ that, intuitively, will be ``small'' enough to fit inside $D$ and ``skinny'' enough to avoid crossings with the binding edges that connect $u$ or $v$ to $H_{\triangle}$. 
To ease the notation, from now on we will refer to $H_{\triangle}$ as $H$.

To compute this drawing, we first augment $H$ as well as its drawings in $\langle \Gamma_a(G), \Gamma_b(G) \rangle$, as follows. We add a dummy vertex $d$ connected to $u$ and to $v$, which is drawn sufficiently close to the crossing point of the two diagonals of $K$ in both $\Gamma_a(G)$ and $\Gamma_b(G)$, so that the triangle formed by $u$, $v$, and $d$ contains $H$.

As in the transition from $P$ to $P'$ in \cref{sec:base}, we remove the crossing edges of every (empty) kite of $H \cup \{u,v,d\}$ and we mark the resulting quadrangular face. Then we apply \cref{th:compatible} to the resulting planar subgraph of $H \cup \{u,v,d\}$ and to its drawings in $\langle \Gamma_a(G), \Gamma_b(G) \rangle$, except for its marked faces. This results in an almost triangulated plane graph $H'$ and in a pair of topologically-equivalent strictly convex drawings $\langle \Gamma_a(H'),\Gamma_b(H') \rangle$ of $H'$. The following observation is directly implied by Property P.\ref{pr:p3} of \cref{def:good-drawing}.

\begin{observation}\label{obs:triangles-at-u}
Every face incident to $u$ or to $v$ in $H'$ is triangular.
\end{observation}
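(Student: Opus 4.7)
The plan is to deduce the observation directly from Property P.\ref{pr:p3} of \cref{def:good-drawing}, by showing that no marked face of $H'$ is incident to $u$ (the argument for $v$ is symmetric). Recall that the marked faces of $H'$ are exactly the quadrangular faces obtained by removing the two crossing edges of each empty kite of $H \cup \{u,v,d\}$. Hence, if no such kite has $u$ as one of its four kite-vertices, then every face of $H'$ incident to $u$ is a (non-marked) face of the almost triangulated graph, and is therefore triangular.

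To rule out the existence of a kite with $u$ as a kite-vertex, I would observe that in such a kite $u$ would be incident to exactly one of the two crossing edges, and in particular some edge incident to $u$ in the subdrawing of $H \cup \{u,v,d\}$ would be crossed. I would then enumerate the edges incident to $u$ in this subdrawing and check that each of them is uncrossed:
\begin{itemize}
    \item The base edge $(u,v)$, which is a kite-edge of $K$ and is uncrossed by P.\ref{pr:p2}.
    \item The dummy edge $(u,d)$, which is drawn arbitrarily close to the crossing point of the two diagonals of $K$, and is thus uncrossed by construction.
    \item Each edge $(u,x)$ with $x \in H$: since $u$ belongs to the non-empty kite $K$ and $x$, being a level-$\ell$ vertex, lies inside $K$, the edge $(u,x)$ is a binding edge, and therefore uncrossed by P.\ref{pr:p3}.
\end{itemize}
Since no edge incident to $u$ in $H \cup \{u,v,d\}$ is crossed, $u$ cannot be a kite-vertex of any kite of $H \cup \{u,v,d\}$. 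By symmetry, neither can $v$. Consequently, no marked face of $H'$ is incident to $u$ or to $v$, and the claim follows from the fact that every non-marked face of the almost triangulated graph $H'$ is a triangle.

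The main subtlety is really just in the enumeration: one has to be sure that the subdrawing of $H \cup \{u,v,d\}$ contains no other edges at $u$ beyond those listed. This is clear because the remaining two vertices of $K$ do not belong to $H \cup \{u,v,d\}$ (they have level $\ell-1$ and are not $u$, $v$, or $d$), so every neighbor of $u$ in this subdrawing is either $v$, $d$, or a level-$\ell$ vertex inside $K$. Once this is in place, the proof reduces to a direct application of P.\ref{pr:p3}, which is why the observation is stated as an immediate consequence of that property.
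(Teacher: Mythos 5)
Your proof is correct and follows exactly the route the paper intends: the paper offers no explicit argument beyond asserting that the observation ``is directly implied by Property P.3,'' and your enumeration of the edges at $u$ (the kite edge $(u,v)$, the dummy edge $(u,d)$, and the binding edges to level-$\ell$ vertices) is precisely the expansion of that one-line justification. The care you take in checking that the enumeration is exhaustive is a welcome addition, not a deviation.
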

Consider the plane graph obtained from $H'$ by removing $u$ and $v$, and let $\cal C$ be the graph formed by the vertices and the edges of its outer face. In the following lemma, we investigate some properties of $\cal C$. The \emph{BC-tree} $\mathcal{T}$ of a connected graph $G$ represents the decomposition of $G$ into its biconnected components, called \emph{blocks}. Namely, $\mathcal{T}$ has a \emph{B-node} for each block of $G$ and a \emph{C-node} for each cutvertex of $G$, such that each B-node is connected to the C-nodes that are part of its block.

\begin{lemma}\label{le:chain}
The following properties of $\cal C$ hold:
\begin{enumerate}[a)]
    \item \label{ch:1} $\cal C$ is outerplane and connected.
    \item \label{ch:2} Each block of $\cal C$ is a cycle, possibly degenerated to a single edge.
    \item \label{ch:3} Every cutvertex of $\cal C$ is connected to both $u$ and $v$ in $H'$.
    \item \label{ch:4} The $BC$-tree of $\cal C$ is a path.
    \item \label{ch:5} Every non cutvertex of $\cal C$ is connected to exactly one of $u$ and $v$ in $H'$, with the exception of exactly two vertices (one of them is $d$) which belong to the blocks of $\cal C$ corresponding to degree-$1$ $B$-nodes in the $BC$-tree of $\cal C$. 
\end{enumerate}
\end{lemma}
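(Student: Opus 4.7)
The plan is to obtain an explicit description of the outer-face boundary of $H' \setminus \{u,v\}$, after which items (a)--(e) follow as direct consequences. By \cref{le:almost}, the almost triangulated plane graph $H'$ is triconnected, so $H' \setminus \{u,v\}$ is connected and its outer face is bounded by a single closed walk $W$. By \cref{obs:triangles-at-u}, every face incident to $u$ (resp.\ to $v$) in $H'$ is a triangle, hence the interior neighbors of $u$ in the cyclic order from $(u,v)$ to $(u,d)$ form a sequence $c_1, \ldots, c_{k-1}$ in which consecutive vertices are joined by edges; denote the analogous sequence around $v$ by $e_1, \ldots, e_{l-1}$. The interior face on edge $(u,v)$ is a triangle, which forces $c_1 = e_1$. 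Merging the triangular fans around $u$ and $v$ with the former outer face $uvd$ yields that $W$ traces
\[
d,\, c_{k-1},\, c_{k-2},\, \ldots,\, c_1 = e_1,\, e_2,\, \ldots,\, e_{l-1},\, d,
\]
whose underlying graph is $\mathcal{C}$. Item (a) is immediate: every vertex of $\mathcal{C}$ lies on the outer face of $H' \setminus \{u,v\}$ by definition, and $\mathcal{C}$ is connected because $W$ is a single closed walk.

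The crux is to identify which vertices of $W$ are visited twice. Any such vertex must appear both in the $c_i$-portion and in the $e_j$-portion of $W$, so it must be a common neighbor of $u$ and $v$ different from $x_0 := c_1$ and from $d$. A planarity argument on the $4$-cycle $u - x_p - v - x_q - u$ (whose two $uv$-paths through $x_p$ and $x_q$ would have to cross if the cyclic orders of these common neighbors around $u$ and around $v$ disagreed) shows that the common neighbors of $u$ and $v$ in the interior of triangle $uvd$ appear in the same cyclic order $x_0, x_1, \ldots, x_m$ around both $u$ and $v$. Therefore $W$ visits $x_0$ and $d$ exactly once and each of $x_1, \ldots, x_m$ exactly twice.

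The remaining items now follow from this description of $W$. The cutvertices of $\mathcal{C}$ are precisely the vertices traversed twice, namely $x_1, \ldots, x_m$, each of which is a common neighbor of $u$ and $v$ in $H'$, proving (c). Between two consecutive anchors in the list $x_0, x_1, \ldots, x_m, d$, the walk consists of two internally vertex-disjoint stretches (one of $c_i$'s and one of $e_j$'s) whose union is a simple cycle, or just the single edge $(x_q, x_{q+1})$ when both stretches reduce to it; these are exactly the blocks of $\mathcal{C}$, giving (b). Each $x_q$ with $1 \le q \le m$ lies in exactly two consecutive blocks, so every C-node of the BC-tree has degree $2$, while the two extremal blocks each contain only one cutvertex and hence correspond to degree-$1$ B-nodes; this proves (d). Finally, every non-cutvertex internal to an intermediate block is a $c_i$ (or $e_j$) that is not a common neighbor, hence adjacent only to $u$ (resp.\ only to $v$) in $H'$, while the two endpoint blocks each contain exactly one additional non-cutvertex, namely $x_0$ and $d$, both adjacent to $u$ and $v$ in $H'$, yielding (e). The main technical obstacle is the planarity argument that locks in the common cyclic order of the common neighbors around $u$ and around $v$; once this is established, the rest is combinatorial accounting on $W$.
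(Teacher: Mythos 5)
Your proof is correct, but it takes a genuinely different route from the paper's for properties (b)--(e). The paper shares your starting point (connectivity of $\cal C$ via the triconnectivity of $H'$ from \cref{le:almost}, and heavy use of \cref{obs:triangles-at-u}), but from there it argues property by property: it proves (c) through a local two-case analysis at each occurrence of a cutvertex on the outer walk (either the predecessor $a$ and successor $b$ share a face of the almost triangulated $H'$, which is impossible, or an edge to $u$ or $v$ must leave $c$ between them), and it then derives (d) and the ``degree-$1$ B-node'' part of (e) by exhibiting a $K_5$ minor whenever a block would contain three vertices joined to both $u$ and $v$. You instead compute a complete normal form for the outer walk of $H' \setminus \{u,v\}$ --- the concatenation of the two triangular fans around $u$ and around $v$, glued at $c_1=e_1$ and at $d$ --- identify the doubly visited vertices as exactly the interior common neighbours of $u$ and $v$ (whose cyclic orders around $u$ and $v$ agree by planarity), and read off all five properties from this description; this bypasses the $K_5$-minor obstruction entirely and yields a sharper picture of $\cal C$ as a chain of cycles indexed by consecutive common neighbours. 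The trade-off is that the paper's version is shorter per property and does not need the fan bookkeeping, whereas yours front-loads one structural computation and makes (b), (d), (e) essentially immediate. One step in your write-up that merits an explicit sentence is the identification of ``visited twice on the outer walk'' with ``cutvertex of the boundary graph $\cal C$'' (as opposed to a cutvertex of $H'\setminus\{u,v\}$); this does follow from your fan description, since the only edges of $\cal C$ are consecutive fan edges and hence no edge can bypass a doubly visited vertex, but as stated it is asserted rather than argued.
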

\begin{proof}
Since $\cal C$ is formed by the vertices and edges of the outer face of $H' \setminus \{u,v\}$, $\cal C$ is outerplane. Also, since $H'$ is triconnected, by \cref{le:almost}, removing two vertices, namely $u$ and $v$, cannot yield a disconnected graph. This proves Property~$(\ref{ch:1})$.
Property $(\ref{ch:2})$ directly follows from the fact that $\cal C$ is formed by the vertices and edges of the outer face of $H' \setminus \{u,v\}$.

Concerning Property~$(\ref{ch:3})$, we actually prove a stronger variant. Let $c$ be a cutvertex of $\cal C$, and consider a closed walk along the boundary of the outer face of $\cal C$. We claim that each occurrence of $c$ along this walk implies the existence, in $H'$, of one copy of edge $(u,c)$ or of one copy of edge $(v,c)$. Since $H'$ is simple, with this claim at hand, it follows that $c$ occurs exactly twice along the walk, and that $c$ is adjacent to both $u$ and $v$. To prove the claim, let $a$ and $b$ be the vertex before and the vertex after an occurrence of $c$ along the walk, respectively. Namely, $a$ and $b$ belong to different biconnected components sharing only $c$. We further distinguish two cases:
\begin{itemize}
    \item There is a face of $H'$ that contains both $a$ and $b$. Since $H'$ is almost triangulated, such face is either a triangle or a marked quadrangle. In the former case, the edge $(a,b)$ would be part of $\cal C$, which is impossible, because $c$ is a cutvertex separating $a$ and $b$. In the latter case, the marked face contains $a,b,c$ and a fourth vertex $x$. By \cref{obs:triangles-at-u}, $x$ is neither $u$ nor $v$, which implies that $x$ belongs to $\cal C$. This is again impossible, because $c$ is a cutvertex separating $a$ and $b$. 
    \item There is no face of $H'$ that contains both $a$ and $b$. In this case, there exists an edge $(x,c)$ belonging to $H'$, such that $(x,c)$ does not belong to $\cal C$. Hence, $x$ is either $u$ or $v$, as desired.
\end{itemize}

To show that the $BC$-tree of $\cal C$ is a path (Property~$(\ref{ch:4})$), i.e., all the $B$- and $C$-nodes of the $BC$-tree have degree at most $2$, we prove equivalently that each cutvertex of $\cal C$ is shared by exactly two blocks and each block contains at most two cutvertices. The former follows from the fact that each cutvertex of $\cal C$ occurs exactly twice in a closed walk along the boundary of the outer face of $\cal C$ by Property $(\ref{ch:3})$. For the latter, we observe that if a block contains three cutvertices, $c_1$, $c_2$ and $c_3$, then the boundary of $\cal C$ contains three vertex-disjoint paths that pairwise connect these cutvertices. Since $u$ and $v$ are connected to each other in $H'$ and since by Property $(\ref{ch:3})$ $u$ and $v$ are connected to each of $c_1$, $c_2$ and $c_3$, $H'$ contains $K_5$ as a minor, which is impossible because $H'$ is planar. 

Concerning Property~$(\ref{ch:5})$, we first show the existence of the two non-cut vertices of $\cal C$ that are connected to both $u$ and $v$. The first one is vertex $d$ by construction. For the other one consider the internal face of $H'$ that is incident to the edge $(u,v)$. By \cref{obs:triangles-at-u}, this face is not marked and thus the third vertex incident to this face, denoted by $d'$, is connected to both $u$ and $v$. Since $d$ and $d'$ as well as all the cutvertices of $\cal C$ are connected to both $u$ and $v$, by \cref{obs:triangles-at-u} and the fact that the vertices of $\cal C$ are on the outer face of $H' \setminus \{u,v\}$, it follows that all other vertices of $\cal C$ are connected to exactly one of $u$ or $v$.
Finally, if $d$ or $d'$ belonged to a block whose $B$-node in the $BC$-tree of $\cal C$ has degree $2$, then this block would contain three vertices all connected to both $u$ and $v$, which would imply the existence of a $K_5$ minor in $H'$ as in the proof of Property~$(\ref{ch:4})$.
\end{proof}

% Further, in each of the two extremal cycles of $\cal C$, there is also a vertex that is connected to both $u$ and $v$ in $H'$; note that one of these vertices coincides with $d$, let $d'$ be the other. Moreover, all the other vertices of $\cal C$ (if any exist) are connected either to $u$ or to $v$.
%In view of Property $(\ref{ch:2})$ of \cref{le:chain}, we refer to each biconnected component of $\cal C$ as a \emph{cycle}, even when degenerated to a single edge.
In view of Properties $(\ref{ch:2})$ and $(\ref{ch:4})$ of \cref{le:chain}, we refer to $\cal C$ as a \emph{chain of cycles} and to its blocks as \emph{cycles}, even when degenerated to single edges.
Moreover, we denote by $d'$ the non cutvertex of $\cal C$ different from $d$ that is incident to both $u$ and $v$, as specified in Property $(\ref{ch:5})$ of \cref{le:chain}.

\medskip\noindent\textbf{Making each chain of cycles skinny.} 
In order to incorporate the level-$\ell$ vertices that lie inside $\triangle$ into the morph of $\langle\Gamma_a(Q),\Gamma_b(Q)\rangle$, we perform a preliminary morph of $\Gamma_a(H')$ to a strictly convex drawing $\Gamma_a^s(H')$ of $H'$ that is \emph{skinny}, in the sense that it satisfies the following requirements with respect to the disk $D$ and the angle $\phi$ associated with the base edge $(u,v)$ derived from the morph of $\langle \Gamma_a(Q),\Gamma_b(Q) \rangle$ (see also \cref{fig:skinny} for an illustration).

\begin{enumerate}[R.1]
    \item \label{r:1} Every cycle of $\cal C$ is drawn inside the disk $D$.
    \item \label{r:2} Every cycle of $\cal C$ is drawn strictly convex.
    \item \label{r:3} The cutvertices of $\cal C$, as well as $d$ and $d'$, lie on the radius $r$ of $D$.
    \item \label{r:4} For every cycle of $\cal C$ and for every segment on its boundary, the smaller of the two angles formed at the intersection of the line through $r$ and the line through the segment is smaller than $\phi$. 
\end{enumerate}

\begin{figure}[t]
    \centering
    \includegraphics[scale=1.1,page=2]{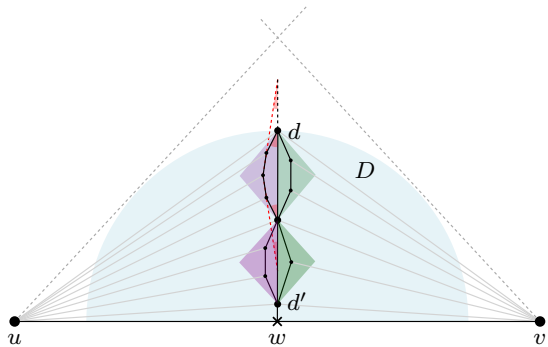}
    \caption{Illustration of the requirements R.\ref{r:1}--R.\ref{r:4} of a skinny drawing.}
    \label{fig:skinny}
\end{figure}

The existence of such a drawing is proven in the following lemma by means of a construction that exploits the properties of $\cal C$ given in \cref{le:chain}.  
\begin{lemma}\label{le:skinny}
There exists a drawing $\Gamma_a^s(H')$ of $H'$ that is strictly convex, skinny, and topologically equivalent to $\Gamma_a(H')$.
%, and whose outer face has the same drawing as in $\Gamma_a(H')$. %R.\ref{r:1}--R.\ref{r:4}.
\end{lemma}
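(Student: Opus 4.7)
The plan is to construct $\Gamma_a^s(H')$ explicitly by exploiting the chain-of-cycles structure of $\cal C$ established in \cref{le:chain}, keeping $u$ and $v$ at their positions from $\Gamma_a(H')$ so that $w$, $r$, $D$, and $\phi$ agree with those coming from the morph of $\langle \Gamma_a(Q),\Gamma_b(Q)\rangle$. By Property~$(\ref{ch:4})$ of \cref{le:chain}, the BC-tree of $\cal C$ is a path; I enumerate the anchor vertices of $\cal C$ (its cutvertices together with the two special non-cut vertices $d$ and $d'$) in the order induced by this path as $a_0 = d, a_1, \ldots, a_{m-1}, a_m = d'$, and place each $a_i$ on $r$ at a distinct height $h_i$, with $h_0 = |r|$ (so $d$ sits at $w'$) and heights strictly decreasing toward $w$. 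This directly realizes Requirement~R.\ref{r:3}.

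Next, I draw each block $B_i$ of $\cal C$, whose distinguished vertices are $a_{i-1}$ and $a_i$. By Property~$(\ref{ch:5})$ of \cref{le:chain}, the non-anchor vertices of $B_i$ split into a path $\pi_u$ of vertices adjacent in $H'$ to $u$ and a path $\pi_v$ of vertices adjacent to $v$. I place $\pi_u$ on the side of $r$ facing $u$ and $\pi_v$ on the opposite side, with strictly monotone heights interpolated between $h_{i-1}$ and $h_i$ and perpendicular offsets distributed on a strictly concave arc, so that $B_i$ is drawn as a strictly convex polygon enclosing a segment of $r$ with the correct rotation system at each vertex; the monotone spacing of the $a_i$'s along $r$ also preserves the rotation systems at $u$ and $v$. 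Finally, I uniformly rescale all perpendicular offsets by a small parameter $\epsilon > 0$: for $\epsilon$ small enough, every boundary segment of every $B_i$ makes an angle smaller than $\phi$ with $r$ (R.\ref{r:4}), all polygons fit inside $D$ (R.\ref{r:1}), and strict convexity is preserved (R.\ref{r:2}).

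It remains to place the vertices of $H'$ that lie, in $\Gamma_a(H')$, strictly inside the region enclosed by some block of $\cal C$. For each block $B_i$ I consider the subgraph of $H'$ induced by the vertices in the closed region bounded by $B_i$; this subgraph has $B_i$ as its outer face and, after temporarily triangulating each marked quadrangular face with a virtual chord, inherits triconnectivity from $H'$ via \cref{le:almost}. Since $B_i$ is drawn strictly convex, I then apply a strictly convex variant of Tutte's barycentric embedding to place the interior vertices so that every triangular face becomes strictly convex, and argue that the virtual chords can be removed without destroying strict convexity of the corresponding marked quadrangles. The rotation system at each interior vertex is preserved by this extension, so the resulting drawing is topologically equivalent to $\Gamma_a(H')$.

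The main obstacle I expect is controlling strict convexity of the marked quadrangular faces, since the classical convex drawing theorems target triangulations with strictly convex outer faces. I plan to address this either by showing that the virtual chords added inside marked faces can be deleted without spoiling strict convexity of the resulting quadrangles---which is plausible because each marked face arises from an empty kite whose two crossing diagonals can be reinserted as generic perturbations---or by invoking a convex drawing result explicitly tailored to almost triangulated plane graphs with all marked cycles drawn strictly convex. A secondary subtlety is verifying that the straight-line edges of $H'$ incident to $u$ or $v$ do not create crossings in $\Gamma_a^s(H')$; this should follow from the monotone placement of anchors along $r$, \cref{obs:triangles-at-u}, and the skinniness bounds R.\ref{r:3}--R.\ref{r:4}.
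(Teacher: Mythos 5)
Your construction of the skinny outer structure coincides with the paper's: you place the anchors (the cutvertices of $\cal C$ together with $d$ and $d'$) on the radius $r$ in chain order to get R.\ref{r:3}, draw each block as a thin strictly convex polygon straddling $r$ with its $u$-side and $v$-side vertices separated (the paper realizes this with two isosceles triangles of apex angle controlled by $\phi$ and circular arcs inside them, which is your concave-arc-plus-rescaling construction in different clothing), and add the edges to $u$ and $v$ last, crossing-free by R.\ref{r:4} and \cref{obs:triangles-at-u}. The one real divergence is how you fill the interior of each block, and that is where your proposal has a gap. Triangulating each marked quadrangle with a virtual chord, running a Tutte-type embedding, and then deleting the chord does not work as stated: if $xzy$ and $xwy$ are two strictly convex triangular faces sharing the chord $(x,y)$, the quadrilateral $zxwy$ obtained by deleting the chord has, at $x$, the sum of the two triangle angles there, which can equal or exceed $\pi$; strict convexity of the marked face is therefore not inherited, and a ``generic perturbation'' cannot repair it because the chord's placement is dictated by the barycentric system rather than being a free parameter. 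You correctly anticipate this obstacle, and your proposed fallback---a convex-drawing theorem applied directly to the untriangulated graph with a prescribed strictly convex outer polygon---is exactly what the paper does: since $H'_c$ is drawn convex in $\Gamma_a(H')$, the algorithm of Chiba et al.~\cite{chiba1984linear} yields a strictly convex drawing of $H'_c$ extending the given strictly convex drawing of its outer cycle $c$, marked quadrangles included, while preserving the rotation system (hence topological equivalence). With that substitution in place of the Tutte/virtual-chord step, your argument matches the paper's proof.
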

\begin{proof}
We prove the statement by construction.
%ing a convex drawing $\Gamma_a^s(H')$ of $H'$ that satisfies R.\ref{r:1}--R.\ref{r:4}.
Initially, we place $u$ and $v$ in the same positions as they are in $\Gamma_a(H')$. Further, we place the cutvertices of the chain of cycles $\cal C$ as well as $d$ and $d'$ on the radius $r$ in the order they appear in the chain to satisfy R.\ref{r:3}. For each cycle $c$ of $\cal C$, we proceed as follows. Let $x$ and $y$ be the two vertices of $c$ that have already been placed on $r$. Let $T^u_{c}$ and $T^v_{c}$ be two isosceles triangles sharing the same base $\overline{xy}$, such that the third vertex of each of them lies inside $D$ and on opposite sides of $r$ and such that the internal angles at $x$ and at $y$ are smaller than $\phi$; refer to the colored triangles in \cref{fig:skinny}. We place the vertices of $c$ that are incident only to $u$ (only to $v$) equidistant along a circular arc connecting $x$ and $y$ that lies completely inside $T^u_{c}$ (inside $T^v_{c}$, respectively).
By the definition of $T^u_{c}$ and $T^v_{c}$, and also by the fact that the two circular arcs are drawn completely inside $T^u_{c}$ and $T^v_{c}$, it follows that R.\ref{r:1}, R.\ref{r:2}, and R.\ref{r:4} are satisfied for the drawing of $c$. 
%\todo[inline]{up to here we completed the describition of how to draw $\cal C$}

To complete the drawing of $\Gamma_a^s(H')$, we describe how to draw the subgraph $H'_c$ of $H'$ that is contained inside or on the boundary of $c$ such that every internal face of $H'_c$ is strictly convex.
%todo : Proof that H'_c is really convex
Since $H'_c$ is drawn convex in $\Gamma_a(H')$, it admits a strictly convex drawing for any given strictly convex drawing of its outer face~\cite{chiba1984linear}. Thus, we can apply the algorithm in~\cite{chiba1984linear}  to construct a strictly convex drawing of $H'_c$, whose outerface is the drawing of $c$ satisfying R.\ref{r:1}-R.\ref{r:4}.
% give name to original $\Gamma_H$  and intermediate $\Gamma_H'$  drawing of $H\cup \{u,v,d\}$$
Finally, we add the edges incident to $u$ and $v$ that are contained inside $\triangle$ to the resulting drawing, which does not introduce crossings due to R.\ref{r:4}. This completes the construction of $\Gamma_a^s(H')$. Since every cycle in $\cal C$ satisfies R.\ref{r:1}--R.\ref{r:4} and since by \cref{obs:triangles-at-u} all faces incident to $u$ and $v$ in $H'$ are triangular, the drawing $\Gamma_a^s(H')$ is strictly convex and skinny as desired. Since our construction and the algorithm in~\cite{chiba1984linear} maintain the cyclic order of the edges around each vertex, we have that $\Gamma_a^s(H')$ is topologically equivalent to $\Gamma_a(H')$. This concludes the proof.
\end{proof}

%By construction, $\Gamma_a^s(H')$ is convex and topologically equivalent to $\Gamma_a(H')$.
To describe the morph between $\Gamma_a(H')$ and $\Gamma_a^s(H')$, we need some more work.
Since both drawings are strictly convex and topologically equivalent, the preconditions of \cref{th:convex} are met. However, to ensure that this morph can be done independently for each \piece, we further need to guarantee that vertices $u$ and $v$ do not move and that all vertices of $H'$ remain inside $\triangle$ throughout the morph. As stated in \cref{th:convex}, this can be achieved if the (triangular) outer face is drawn the same in the two input drawings, which is not necessarily the case for $\Gamma_a( H')$ and $\Gamma_a^s(H')$ because of the position of $d$ (recall that $u$ and $v$ have the same position in $\Gamma_a(H')$ and $\Gamma_a^s(H')$).
To this end, we augment $\Gamma_a(H')$ and $\Gamma_a^s(H')$ by adding a new vertex $d^*$ in the outer face of $H'$ and connect it to $u$, $v$, and $d$. Moreover, we place $d^*$ at the same position inside $\triangle$ in both $\Gamma_a(H')$ and $\Gamma_a^s(H')$ so that the triangle formed by $u$, $v$, and $d^*$ contains all the other vertices of $H'$ (in particular, $d$). The edge $(d,d^*)$ can always be drawn without crossings, as $u$, $v$, and $d$ were the vertices on the outer face of $H'$ before. After this augmentation, we apply \cref{th:convex} to compute the desired morph of $\langle \Gamma_a(H'),\Gamma_a^s(H') \rangle$, and then we remove $d^*$ from the drawings.

% Further, we set 
% \begin{inparaenum}[(i)]
% \item the external angle at each cut-vertex $z$ of $\cal C$  to $180^\circ - \epsilon_z$, 
% \item the external angle at $d$ and $d'$  to  $360^\circ - \epsilon_d$ and to $360^\circ - \epsilon_{d'}$, respectively, and
% \item the external angle at each remaining vertex $w$  to $180^\circ + \epsilon_w$.
% \end{inparaenum}
% In order to maintain R.\ref{r:2}, it suffices to select positive values for $\epsilon_z$ and $\epsilon_w$. Since each of the $\epsilon$ values above can be chosen arbitrarily close to $0$, we select values that satisfy both R.\ref{r:1} and R.\ref{r:4}.
%We call \emph{canonical} the drawing of $\cal C$ constructed so far. 
\medskip\noindent\textbf{Performing the global morph.}
Applying the above procedure for each \piece yields a drawing $\Gamma_a(G')$ of the supergraph $G'$ of $G$ that is the union of $Q$ and all the graphs $H_\triangle'$ corresponding to every \piece $\triangle$.
Observe that $\Gamma_a(G')$ is composed of $\Gamma_a(Q)$ and the skinny drawing $\Gamma^s_a(H_\triangle')$ of every graph $H_\triangle'$.
To perform the global morph, recall that the vertices of the subgraph $Q$ of $G'$ follow the same trajectories as in the morph between $\Gamma_a(Q)$ and $\Gamma_b(Q)$ (which has been recursively computed). The level-$\ell$ vertices of each subgraph $H_\triangle'$ are moved inside $\triangle$, which again ensures that this can be done independently for each \piece. In the following we describe the trajectories of one such subgraph.
We denote this subgraph as $H'$ and adopt the same notation as before.
%for the base edge $(u,v)$, the disk $D$ centered at the midpoint $w$ , the chain of cycles $\cal C$ and vertices $d$ and $d'$.
%We now perform a morph from $\Gamma_a(G')$ to a drawing $\Gamma_b(G')$ of $G'$
%. The resulting drawing $\Gamma_b(G')$ will be defined implicitly by the description of the morph.
%defined as follows. Namely, in this morph, , while the level-$\ell$ vertices of each subgraph $H_\triangle'$ are moved inside $\triangle$, which ensures that this can be done independently. In the following we describe the trajectories of one such subgraph.
%Again, we will refer to $H_\triangle'$ as $H'$ in the description.

%Let $H'$ be the subgraph contained inside a \piece $\triangle$ in $\Gamma_a(G')$. Let $(u,v)$ be the base edge of $\triangle$ and let $\cal C$ be the chain of cycles of $H'$ obtained by removing $u$ and $v$ with $d$ and $d'$ being the two non cut-vertices of $\mathcal{C}$ that are connected to both $u$ and $v$ in $H'$, where $d$ is the one that is on the outer face of $H'$.
Since the trajectories of $u$ and $v$ are specified by the morph between $\Gamma_a(Q)$ and $\Gamma_b(Q)$, we only describe the trajectories of $H' \setminus \{u,v\}$, i.e., the vertices of level $\ell$; see \cref{fig:traj} for an example. 
%At every time instant of this morph, consider the midpoint $w$ of $(u,v)$ and the disk $D$ centered at $w$ defined as above.
The drawing of $H' \setminus \{u,v\}$ is a copy of $\Gamma_a^s(H' \setminus \{u,v\})$ rotated and translated so that the cutvertices of $\cal C$ as well as $d$ and $d'$ lie on the radius of $D$ perpendicular to $(u,v)$, and the distance between $w$ and $d'$ is the same as in $\Gamma_a^s(H')$. 
%This ensures that the edges connecting $u$ and $v$ to $\cal C$ are drawn without crossings by R.\ref{r:4}, therfore the
This ensures that the drawing of $H'$ remains skinny, planar (by R.\ref{r:4}), and strictly convex at every time instant.
%, i.e. it satisfies R.\ref{r:1}-R.\ref{r:4}. 

\begin{figure}[t]
    \centering
    \includegraphics[scale=1,page=3]{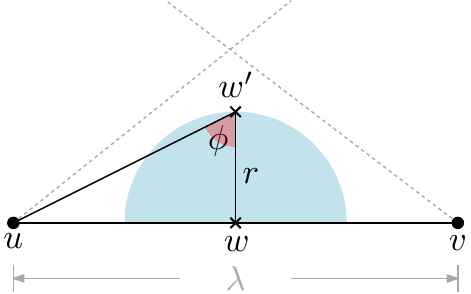}
    \caption{Computing the trajectories for the vertices of $H'\setminus\{u,v\}$ based on the, already computed, trajectories of $u$ and $v$.}
    \label{fig:traj}
\end{figure}

Let $\Gamma_b(G')$ be the drawing of $G'$ obtained so far. 
The next step of the morph is to transform, for each subgraph $H_\triangle'$, the current skinny drawing  $\Gamma_b^s(H_\triangle')$ in $\Gamma_b(G')$ to $\Gamma_b(H_\triangle')$. By construction, $\Gamma_b^s(H_\triangle')$ and $\Gamma_b(H_\triangle')$ are topologically equivalent and strictly convex. Similarly as for $\Gamma_a(H_\triangle')$, we insert vertex $d^*$ so that the outer face of $H_\triangle'$ is drawn the same in both $\Gamma^s_b(H'_\triangle)$ and $\Gamma_b(H'_\triangle)$, which allows to apply \cref{th:convex} independently for each $H_\triangle'$.
The target drawing $\Gamma_b(G)$ is obtained by removing the vertices and edges in $G' \setminus G$ and by reinserting the crossed edges in the marked faces. This concludes the proof of \cref{thm:main}.

\section{Implications of \cref{thm:main}}\label{se:implications}

In this section, we discuss the applicability of \cref{thm:main} by presenting meaningful families of $1$-planar graphs that admit \good drawings. 

An $n$-vertex $1$-planar graph has at most $4n-8$ edges~\cite{BSW83}, and if it achieves exactly this density, then it is called \emph{optimal}. Moreover, any $1$-planar drawing of an optimal $1$-planar graph $G$ is such that the uncrossed edges induce a plane triconnected quadrangulation $P$, while each pair of crossing edges of $G$ is drawn inside a corresponding face of $P$~\cite{S10b}. When restricting to straight-line drawable $1$-planar graphs, this bound is reduced to $4n-9$~\cite{DBLP:journals/ipl/Didimo13}. Similarly to the general case, an optimal $1$-planar straight-line drawing is one in which the uncrossed edges induce a plane triconnected graph whose every inner face is a quadrangle, while the outer face is a triangle~\cite{DBLP:journals/ipl/Didimo13}. As a consequence, we obtain that each kite is empty and its kite edges are uncrossed. Therefore, any optimal $1$-planar straight-line drawing is \good.  

Another family of $1$-planar graphs that recently attracted considerable attention are the \emph{IC-planar} graphs \cite{AMC10,DBLP:journals/tcs/BrandenburgDEKL16,CzapS17,LiottaM16}, which admit $1$-planar drawings where the crossed edges induce a matching. Note that both the binding edges and the kite edges that are part of an IC-planar drawing are uncrossed. However, P.\ref{pr:p2} of \cref{def:good-drawing} may not be satisfied for those kite edges that are not part of the drawing. It follows that, if an IC-planar drawing is also \emph{kite-augmented}~\cite{DBLP:journals/algorithmica/Brandenburg19}, i.e., it contains all kite edges, then it is \good.

Overall, the following result is a corollary of \cref{thm:main}.

\begin{corollary}\label{co:ic-planar}
There exists a morph between any pair of topologically-equivalent  optimal $1$-planar or kite-augmented IC-planar straight-line drawings.
\end{corollary}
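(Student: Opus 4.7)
The plan is to derive the corollary from \cref{thm:main} by verifying that every optimal $1$-planar straight-line drawing and every kite-augmented IC-planar drawing satisfies the three properties P.\ref{pr:p1}, P.\ref{pr:p2}, P.\ref{pr:p3} of \cref{def:good-drawing}, and is therefore \good. Property P.\ref{pr:p1} is immediate in both cases, since optimal $1$-planar and IC-planar drawings are $1$-planar by definition. Hence the core of the proof is to establish P.\ref{pr:p2} and P.\ref{pr:p3} separately for the two families, using the structural characterizations already cited in the surrounding text.

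For optimal $1$-planar straight-line drawings, the structural result of~\cite{DBLP:journals/ipl/Didimo13} states that the uncrossed edges induce a plane triconnected graph whose inner faces are quadrangles and whose outer face is a triangle, and each pair of crossing edges is drawn inside one such quadrangular face. Consequently, for every kite $K$ the four kite edges coincide with the boundary of a quadrangular face of the uncrossed subgraph; they all exist and are uncrossed, which yields P.\ref{pr:p2}. Moreover, no vertex can lie inside $K$, so every kite is empty and no binding edges exist, making P.\ref{pr:p3} hold vacuously.

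For kite-augmented IC-planar drawings, P.\ref{pr:p2} is an immediate consequence of the kite-augmentation, which by definition inserts all missing kite edges; these edges are uncrossed because in an IC-planar drawing the crossed edges form a matching, and thus no edge incident to a vertex already involved in a crossing can itself be crossed. The same argument gives P.\ref{pr:p3}: any binding edge $(u,v)$ has $u$ belonging to a kite $K$, so $u$ is an endpoint of a pair of crossing edges; since crossings are independent, every other edge incident to $u$, and in particular $(u,v)$, is uncrossed.

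Once both families are shown to produce \good drawings, the corollary follows by applying \cref{thm:main} to the given pair of topologically-equivalent drawings. The only subtle step, and the one I would be most careful about, is verifying P.\ref{pr:p2} for IC-planar drawings without the kite-augmentation hypothesis: this is precisely where the kite-augmentation is used, and it explains why the corollary requires this extra assumption rather than holding for arbitrary IC-planar straight-line drawings.
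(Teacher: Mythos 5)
Your proof is correct and follows essentially the same route as the paper: you verify properties P.\ref{pr:p1}--P.\ref{pr:p3} of \cref{def:good-drawing} for each of the two families, using the structural characterization of optimal $1$-planar straight-line drawings from~\cite{DBLP:journals/ipl/Didimo13} (empty kites with uncrossed boundaries) and the matching property of crossed edges in IC-planar drawings together with kite-augmentation, and then apply \cref{thm:main}. Your closing remark correctly identifies why kite-augmentation is needed, which is exactly the point the paper makes about P.\ref{pr:p2} possibly failing for plain IC-planar drawings.
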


We conclude this section with a remark. As already mentioned, Chambers et al.~\cite{morphing-torus:soda2021} studied morphs of toroidal graphs and asked to generalize their result to surfaces of higher genus. We note that, since an $n$-vertex graph embeddable on a surface of genus $g$ has at most $3n+6(g-1)$ edges, while $n$-vertex optimal $1$-planar straight-line drawable graphs have $4n-9$ edges, it follows that the latter do not admit an embedding (without edge crossings) on any surface of bounded genus. Thus, a solution to the open problem by Chambers et al. would not provide morphs of \good drawings.

\section{Open Problems}\label{sec:conclusions}
We made a first step towards the problem of morphing pairs of non-planar drawings. Besides the general open problem of morphing any two such drawings~\cite{DBLP:conf/icalp/AngeliniLBFPR14}, the main questions that stem from our research are as follows: 
\begin{inparaenum}[(i)]
\item Is it possible to compute morphs of \good drawings where the vertex trajectories have bounded complexity?
\item Regardless of the complexity, is it possible to extend \cref{thm:main} to all $1$-planar drawings or to meaningful families of $k$-planar drawings ($k>1$)?
\item Further families of beyond-planar graphs~\cite{DBLP:journals/csur/DidimoLM19} could also be considered, for instance, does every pair of RAC drawings~\cite{DBLP:books/sp/20/Didimo20} admit a morph?
\end{inparaenum}

\bibliographystyle{splncs04}
\bibliography{bibliography.bib}

\end{document}